\documentclass{llncs}

\usepackage{makeidx}  

\usepackage[utf8]{inputenc}
\usepackage{times}
\usepackage{amsmath,amssymb}
\usepackage{tikz}
\usepackage{hyperref}
\usepackage[linesnumbered]{algorithm2e}
\usetikzlibrary{positioning}

\usepackage{graphicx}

\usepackage{cleveref}

\usepackage[normalem]{ulem}

\usepackage[
backend=biber,
sorting=nyt,
style=numeric,
uniquename=mininit,
uniquelist=minyear
]{biblatex}
\addbibresource{mybibliography.bib}


\setlength{\floatsep}{15pt plus 2.0pt minus 2.0pt}

\newcommand{\qedhere}{\qed}



\newcommand{\fr}{\mathbf{R}}
\newcommand{\frset}{\mathcal{R}} 

\newcommand{\formula}[1]{\mathbf{#1}}
\newcommand{\prop}{\formula{P}}
\newcommand{\init}{\formula{I}}
\newcommand{\trans}{\formula{T}}
\newcommand{\transpdrc}{\trans^{\textsc{PDRC}}}

\newcommand{\middot}{$\cdot$}

\begin{document}
\RestyleAlgo{boxruled}

\title{A Supervisory Control Algorithm Based on Property-Directed Reachability\thanks{The final publication is available at Springer via \url{https://doi.org/10.1007/978-3-319-70389-3_8}.}}

\authorrunning{Jonatan Kilhamn et al.} 
%
\author{Koen Claessen\inst{1} \and Jonatan Kilhamn\inst{1} \and Laura Kov{\'a}cs\inst{1}\inst{3} \and Bengt Lennartson\inst{2}}
\institute{Department of Computer Science and Engineering, \and Department of Electrical Engineering, \\ Chalmers University of Technology \\
\and Faculty of Informatics, 
Vienna University of Technology \\
\texttt{\{koen, jonkil, laura.kovacs, bengt.lennartson\}@chalmers.se}}

\date{\today}



\maketitle

\pagestyle{headings}  

\begin{abstract}    



We present an algorithm for synthesising a controller (supervisor) for a discrete event system (DES) based on the property-directed reachability (PDR) model checking algorithm. The discrete event systems framework is useful in both software, automation and manufacturing, as problems from those domains can be modelled as discrete supervisory control problems. As a formal framework, DES is also similar to domains for which the field of formal methods for computer science has developed techniques and tools. In this paper, we attempt to marry the two by adapting PDR to the problem of controller synthesis. The resulting algorithm takes as input a transition system with forbidden states and uncontrollable transitions, and synthesises a safe and minimally-restrictive controller, correct-by-design. We also present an implementation along with experimental results, showing that the algorithm has potential as a part of the solution to the greater effort of formal supervisory controller synthesis and verification.


\keywords{Supervisory control \middot Discrete-event systems \middot Property-directed reachability \middot Synthesis \middot Verification \middot Symbolic transition system}

\end{abstract}

\section{Introduction}

Supervisory control theory deals with the problems of finding and verifying controllers to given systems. One particular problem is that of controller synthesis: given a system and some desired properties---safety, liveness, controllability---automatically change the system so that it fulfills the properties. There are several approaches to this problem, including ones based on binary decision diagrams (BDD) \cite{BDD1,BDD2}, predicates \cite{predicate-transformers-sct} and the formal safety checker IC3 \cite{Shoaei2014}. 

In this work we revisit the application of IC3 to supervisory control theory. Namely, we present an algorithm for synthesising a controller (supervisor) for a discrete event system (DES), based on property-directed reachability~\cite{PDR} (PDR, a.k.a. the method underlying IC3~\cite{Bradley2011}). Given a system with a safety property and uncontrollable transitions, the synthesised controller is provably \emph{safe}, \emph{controllable} and \emph{minimally restrictive}~\cite{DESControl}.

\subsection{An illustrative example}

Let us explain our contributions by starting with an example. \Cref{fig:example} shows the transition system of a finite state machine extended with integer variables $x$ and $y$. The formulas on the edges denote guards (transition cannot happen unless formula is true) and updates (after transition, $x$ takes the value specified for $x'$ in the formula). This represents a simple but typical problem from the domain of control theory, and is taken from~\cite{Shoaei2015}.

\begin{figure}
\centering
\begin{tikzpicture}
\begin{scope}[every node/.style={thick}]
    \node[] (I) at (0,1) {$x=0,y=0$};
    \node[shape=circle,draw=black] (A) at (0,0) {$l_0$};
    \node[shape=circle,draw=black] (B) at (0,-2) {$l_1$};
    \node[shape=circle,draw=black] (C) at (3,0) {$l_2$};
    \node[shape=circle,draw=black] (D) at (3,-2) {$l_3$};
    \node[shape=circle,draw=black] (E) at (3,-4) {$l_4$};
    \node[shape=circle,draw=black,style=dashed] (F) at (7,-2) {$l_5$};
\end{scope}

\begin{scope}[->, 
              every edge/.style={draw=black,thick}]
    \path [->](I) edge (A);
    \path [->](A) edge node[left] {$b:y'=1$} (B);
    \path [->](A) edge node[above] {$a:y'=2$} (C);
    \path [->](B) edge[bend left=20] node[above] {$a:\top$} (D);
    \path [->](D) edge[bend left=20] node[below] {$c:x'=x+1$} (B);
    \path [->](C) edge node[right] {$b:\top$} (D);
    \path [->](D) edge node[right] {$\alpha:y=2\land x\le 2$} (E);
    \path [->](D) edge node[above] {$\alpha:y=2\land x>2$} (F);
    \path [->](E) edge[loop right] node[right] {$\omega:\top$} (E);
\end{scope}
\end{tikzpicture}
\caption{The transition system of the example.}
\label{fig:example}
\end{figure}
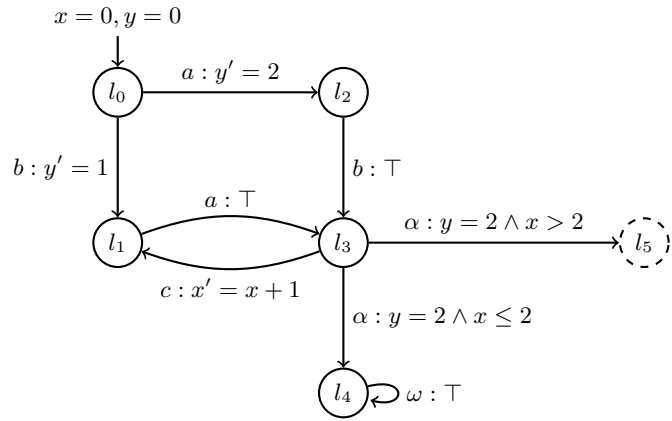

In a controller synthesis problem, a system such as this is the input. The end result is a restriction of the original system, i.e. one whose reachable state space is a subset of that of the original one. In this extended finite state machine (denoted as EFSM) representation, this is written as new and stronger guard formulas on some of the transitions.

Our example  has two more features: the location $l_5$, a dashed circle in the figure, is \emph{forbidden}, while the event $\alpha$ is \emph{uncontrollable}. The latter feature means that the synthesised controller must not restrict any transition marked with the event $\alpha$.

To solve this problem, we introduce an algorithm based on PDR~\cite{PDR} used in a software model checker (Section~\ref{sec:algo}). Intuitively, what our algorithm does is to incrementally build an inductive invariant which in turn implies the safety of the system. This invariant is constructed by ruling out paths leading into the bad state, either by proving these bad states unreachable from the initial states, or by making them unreachable via strengthening the guards.

In our example, the bad state $l_5$ is found to have a preimage under the transition relation $\trans$ in $l_3 \land y=2 \land x>2$. The transition from $l_3$ to $l_5$ is uncontrollable, so in order to guarantee safety, we must treat this prior state as unsafe too. The transitions leading into $l_3$ are augmented with new guards, so that the system may only visit $l_3$ if the variables make a subsequent transition to $l_5$ impossible. By applying our work, we refined \Cref{fig:example} with the necessary transition guards and a proof that the new system is safe. We show the refined system obtained by our  approach in \Cref{fig:example_controlled}. 

\begin{figure}
\centering
\begin{tikzpicture}
\begin{scope}[every node/.style={thick}]
    \node[] (I) at (0,1) {$x=0,y=0$};
    \node[shape=circle,draw=black] (A) at (0,0) {$l_0$};
    \node[shape=circle,draw=black] (B) at (0,-2) {$l_1$};
    \node[shape=circle,draw=black] (C) at (3,0) {$l_2$};
    \node[shape=circle,draw=black] (D) at (3,-2) {$l_3$};
    \node[shape=circle,draw=black] (E) at (3,-4) {$l_4$};
    \node[shape=circle,draw=black,style=dashed] (F) at (7,-2) {$l_5$};
\end{scope}
\begin{scope}[->, 
              every edge/.style={draw=black,thick}]
    \path [->](I) edge (A);
    \path [->](A) edge node[left] {$b:y'=1$} (B);
    \path [->](A) edge node[above] {$a:y'=2$} (C);
    \path [->](B) edge[bend left=20] node[above] {$a:\mathbf{y \ne 2 \lor x\ngtr 2}$} (D);
    \path [->](D) edge[bend left=20] node[below] {$c:x'=x+1$} (B);
    \path [->](C) edge node[right] {$b:\mathbf{y \ne 2 \lor x \ngtr 2}$} (D);
    \path [->](D) edge node[right] {$\alpha:y=2\land x\le 2$} (E);
    \path [->](D) edge node[above] {$\alpha:y=2\land x>2$} (F);
    \path [->](E) edge[loop right] node[right] {$\omega:\top$} (E);
\end{scope}
\end{tikzpicture}
\caption{The transition system from the example, with guards updated to reflect the controlled system.}
\label{fig:example_controlled}
\end{figure}
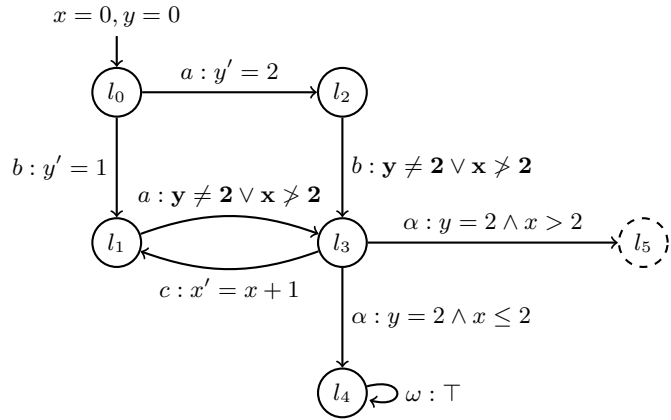

\subsection{Our Contributions}
\begin{enumerate}
\item
In this paper we present a novel algorithm based on PDR for controller synthesis (Section~\ref{sec:algo}) and prove correctness and termination of our approach (Section~\ref{sec:props}). To the best of our knowledge, PDR has not yet been applied to supervisory control systems in this fashion.  We prove that our algorithm terminates (given finite variable domains) and that the synthesised controller is safe, minimally-restrictive, and respects the controllability constraints of the system. Our algorithm encodes system variables in the SAT domain; we however believe that our work can be extended by using satisfiability modulo theory (SMT) reasoning instead of SAT.   

\item We implemented our algorithm in the model checker Tip~\cite{Tip}. 
We evaluated our implementation on a number of  control theory problems and give practical evidence of the benefits of our work (see Section~\ref{sec:exp}).
\end{enumerate}

\section{Background}\label{sec:prelim}


We use standard terminology and notation from first-order logic (FOL) and  restrict  formulas mainly to quantifier-free formulas. We reserve $\prop, \fr, \trans, \init $ to denote formulas describing, respectively, safety properties, ``frames'' approximating reachable sets, transition relations and initial properties of control systems; all other formulas will be denoted with $\phi, \psi$, possibly with indices.  We write variables as $x,y$ and sets of variables as $X,Y$. A \emph{literal} is an atom or its negation, a \emph{clause} a disjunction of literals, and a \emph{cube} a conjunction of literals. We use $\frset$ to denote a set of clauses, intended to be read as the conjunction of those clauses. When a formula ranges over variables in two or more variable sets, we take $\phi(X,Y)$ to mean $\phi(X \cup Y)$.

For every variable $x$ in the control system, we assume the existence of a unique variable $x'$ representing the \emph{next-state value} of $x$. Similarly, the set $X'$ is the set $\{x'|x \in X\}$. As we may sometimes drop the variable set from a formula if it is clear from the context, i.e. write $\phi$ instead of $\phi(X)$, we take $\phi'$ to mean $\phi(X')$ in a similar fashion.

\subsection{Modelling Discrete Event Systems}
\label{sec:des-models}

A given DES can be represented in several different ways. The simple, basic model is the finite state machine (FSM) \cite{state-machines}. A state machine is denoted by the tuple $G = \langle Q, \Sigma, \delta, Q^i\rangle$, where $Q$ is a finite set of states, $\Sigma$ the finite set of events (alphabet), $\delta \subseteq Q \times \Sigma \times Q$ the transition relation, and $Q^i \subseteq Q$ the set of initial states.

In this notation, a \emph{controller} can be represented as a function $C:Q \rightarrow 2^\Sigma$ denoting which events are enabled in a given state. For any $\sigma \in \Sigma$ and $q \in Q$, the statement $\sigma \in C(q)$ means that the controller allows transitions with the event $\sigma$ to happen when in $q$; conversely, $\sigma \notin C(q)$ means those transitions are prohibited.

\noindent{\bf Extended Finite State Machine.} 
The state machine representation is general and monolithic. In order to more intuitively describe real supervisory control problems, other formalisms are also used. Firstly, we have the extended finite state machine (EFSM), which is an FSM extended with FOL formulas over variables. In effect, we split the states into \emph{locations} and \emph{variables}, and represent the system by the tuple $A = \langle X, L, \Sigma, \Delta, l^i, \Theta \rangle$. Here, $X$ is a set of variables, $L$ a set of locations, $\Sigma$ the alphabet, $\Delta$ the set of transitions, $l^i \in L$ the initial location and $\Theta(X)$ a formula describing the initial values of the variables.

A transition in $\Delta$ is now a tuple $\langle l,a,m \rangle$ where $l,m$ are the entry and exit locations, respectively, while the action $a=(\sigma,\phi)$ consists of the event $\sigma \in \Sigma$ and $\phi(X,X')$. The interpretation of this is that the system can make the transition from $l$ to $m$ if the formula $\phi(X,X')$ holds. Since the formula can include next-state variables---$\phi$ may contain arbitrary linear expressions over both $X$ and $X'$---the transition can specify updated variable values for the new state.

We have now defined almost all of the notation used in the example in \Cref{fig:example}. In the figure, we write $\sigma:\phi$ to denote the action $(\sigma,\phi)$. Furthermore, the figure is simplified greatly by omitting next-state assignments on the form $x'=x$, i.e. $x$ keeping its current value. If a variable does not appear in primed form in a transition formula, that formula is implied to have such an assignment.

\noindent{\bf Symbolic Representation.} 
Moving from FSM to EFSM can be seen as ``splitting'' the state space into two spaces: the locations and the variables. A given feature of an FSM can be represented as either one (although we note that one purpose for using variables is to easier extend the model to cover an infinite state space). Using this insight we can move to the ``other extreme'' of the \emph{symbolic transition system} (STS): a representation with only variables and no locations.

The system is here represented by the tuple $S_A = \langle \hat{X}, \trans(\hat{X},\hat{X}'), \init(\hat{X}) \rangle$ where $\hat{X}$ is the set of variables extended by two new variables $x_L$ and $x_\Sigma$ with domains $L$ and $\Sigma$, respectively. With some abuse of notation, we use event and variable names to denote formulas over those variables, such as $l_n$ for the literal $x_L=l_n$ and $\lnot \sigma$ for the literal $x_\Sigma \neq \sigma$. The initial formula $\init$ and transition formula $\trans$ are constructed from the corresponding EFSM representation as $\init(\hat{X})=(x_L=l^i) \land \Theta(X)$ and $\trans(\hat{X},\hat{X}') = \bigvee_{\langle l,(\sigma,\phi),m \rangle \in \Delta}(l \land \sigma \land \phi(X,X') \land m')$.

In this paper, we will switch freely between the EFSM and STS representations of the same system, depending on which is the best fit for the situation. Additionally, we will at times refer to $\hat{X}$ as only $X$, as long as the meaning is clear from context. In either representation, we will use \emph{state} to refer to a single assignment of location and variables, and \emph{path} for a sequence of states $s_0,s_1,...,s_k$.

\subsection{Supervisory Control}
\label{sec:desiderata}

The general problem of supervisory control theory is this: to take a transition system, such as the ones we have described so far, and modify it so that it fulfils some property which the unmodified system does not. There are several terms in this informal description that require further explanation.

The properties that we are interested in are generally \emph{safety}, \emph{non-blocking}, and/or \emph{liveness}, which can be seen as a stronger form of non-blocking. Controlling for a safety property means that in the controlled system, there should be no sequence of events which enables transitions leading from an initial state to a forbidden state.

Non-blocking and liveness are defined relative to a set of marked state. The former means that at least one such state is reachable from every state which is reachable from the initial states. The latter, liveness, implies non-blocking, as it is the guarantee that the system not only \emph{can} reach but \emph{will} return to a marked state infinitely often. In this work we have reduced the scope of the problem by considering only safety.

Furthermore, we talk about the property of \emph{controllability}. This is the notion that some events in a DES are uncontrollable, which puts a restriction on any proposed controller: in order to be valid, the transitions involving uncontrollable events must not be restricted. Formally, in an (E)FSM it is enough to split the alphabet into the uncontrollable $\Sigma_u \subseteq \Sigma$ and the controllable $\Sigma_c = \Sigma \setminus \Sigma_u$. In an STS, this is expressed by the transition relation taking the form $\trans = \trans_u \lor \trans_c$, where $\trans_c$ and $\trans_u$ include literals $x_L = \sigma$ for, respectively, only controllable and only uncontrollable events $\sigma$.

Finally there is the question of what form this ``controlled system'' takes, since a controller function $C:Q\rightarrow2^\Sigma$ can be impractical. A common method is that of designating a separate state machine as the supervisor, and taking the controlled system to be the \emph{synchronous composition} of the original system and the supervisor~\cite{Synch}. In short, this means running them both in parallel, but only allowing a transition with a shared event $\sigma$ to occur simultaneously in both sub-systems.

However, the formidable theory of synchronised automata is not necessary for the present work. Instead, we take the view that the controlled system is the original system, either in the EFSM or STS formulation, with some additions. 


In the EFSM case, the controlled system has the exact same locations and transitions, but additional guards and updates may be added. In other words, the controlled system augments each controllable transition by replacing the original transition formula $\phi$ with the new formula $\phi^s = \phi \land \phi^\text{new}$. The uncontrollable transitions are left unchanged. In the STS case, the new transition function is $\trans^S = \trans_u \lor \trans^S_c$ where $\trans^S_c=\trans_c \land \trans^\text{new}_c$. This way, all uncontrollable transitions are guaranteed to be unmodified in the controlled system.

Finally, a controlled system, regardless of which properties the controller is set out to guarantee, is often desired to be \emph{minimally restrictive} (eqiv. maximally permissive). The restrictiveness of a controlled system is defined as follows: out of two controlled versions $S_1$ and $S_2$ of the same original system $S$, $S_1$ is more restrictive than $S_2$ if there is at least one state, reachable under the original transition function $\trans$, which is reachable under $\trans^{S_2}$ but unreachable under $\trans^{S_1}$. A controlled system is minimally restrictive if no other (viable) controlled system exists which is less restrictive.
The word ``viable'' in brackets shows that one can talk about the minimally restrictive \emph{safe} controller, the minimally restrictive \emph{non-blocking} controller and so on; for each combination of properties, the minimally restrictive controller for those properties is different.


\section{PDRC: Property-Driven Reachability-Based Control}\label{sec:algo}


Property-driven reachability (PDR)~\cite{PDR} is a name for the method underlying IC3~\cite{Bradley2011}, used to verify safety properties in transition systems. In this paper we present Property-Driven Reachability-based Control (PDRC), which extends PDR from verifying safety to synthesising a controller which makes the system safe. In order to explain PDRC, we first review the main ingredients of PDR.

PDR works by successively \emph{blocking} states that are shown to lead to unsafe states in certain number of steps. Blocking a state at step $k$ here means performing SAT-queries to show that the state is unreachable from the relevant \emph{frame} $\fr_k$. A frame $\fr_k$ is a predicate over-approximating the set of states reachable from the initial states $\init$ in $k$ steps.

When a state is blocked---i.e. shown to be unreachable---the relevant frame is updated by excluding that state from the reachable-set approximation. If a state cannot be blocked at $\fr_k$, the algorithm finds its preimage $s$ and proceeds to block $s$ at $\fr_{k-1}$. If a state that needs to be blocked intersects with the initial states, the safety property of the system has been proven false. Conversely, if two adjacent frames $\fr_i,\fr_{i+1}$ are identical after an iteration, we have reached a fixed-point and a proof of the property $P$ in one of them entails a proof of $P$ for the whole system.


With PDRC, we focus on the step where PDR has found a bad cube $s$ (representing unsafe states) in frame $\fr_k$, and proceeds to check whether it is reachable from the previous frame $\fr_{k-1}$. If it is not, this particular cube was a false alarm: it was in the over-approximation of $k$-reachable states, but after performing this check we can sharpen that approximation to exclude $s$. If $s$ was reachable, PDR proceeds to find its preimage $t$ which is in $\fr_{k-1}$. Note that $t$ is also a bad cube, since there is a path from $t$ to an unsafe state. However, in a supervisory control setting, there is no reason not to immediately control the system by restricting all controllable transitions from $t$ to $s$. This observation is the basis of our PDRC algorithm.

\subsection{Formal Description of PDRC}
\label{sec:alg_description}

As PDRC is very similar to PDR, this description and the pseudocode procedures draw heavily from~\cite{PDR}.


Our PDRC algorithm is given in \Cref{alg:main}.
As input, we take a transition system that can be represented by a transition function $\trans(X,X') = \trans_c \lor \trans_u$, i.e. one where each possible transition is either controllable or uncontrollable; and a safety property $\prop(X)$. The variables in $X$ are boolean, in order to allow the use of a SAT solver -- although see \Cref{sec:smt} describing an extension from SAT to SMT. 

Throughout the run of the algorithm, we keep a \emph{trace}: a series of frames $\fr_i, 0 \le i \le N$. Each $\fr_i(X)$ is a predicate that over-approximates the set of states reachable from $\init$ in $i$ steps or less. $\fr_0 = \init$, where $\init$ is a formula encoding the initial states.

Each frame $\fr_i, i > 0$ can be represented by a set of clauses $\frset_i=\{c_{ij}\}_j$, such that $\bigwedge_j c_{ij}(X) = \fr_i(X)$. An empty frame $\frset_j=\{\}$ is considered to encode $\top$, i.e. the most over-approximating set possible.


We maintain the following invariants:

\begin{enumerate}
    \item $\fr_i \rightarrow \fr_{i+1}$
    \item $\fr_i \rightarrow \prop$, except for $i=N$
    \item $\fr_{i+1}$ is an over-approximation of the image of $\fr_i$ under $\trans$
\end{enumerate}

\label{sec:trace_invariants}

Starting with $N=1$ and $\frset_1= \{\}$, we proceed to do the first iteration of the blocking and propagation steps, as shown in \Cref{alg:main}.

The ``blocking step'' consists of the while-loop (lines \ref{main:block_start}--\ref{main:block_end}) of \Cref{alg:main}, and coming out of that loop we know that $\fr_N \rightarrow \prop$. The propagation step follows (lines \ref{main:prop_start}--\ref{main:prop_end}), and here we consider for each clause in some frame of the trace whether it also holds in the next frame.

Afterwards, we check for a fix-point in $\fr_i$; i.e. two syntactically equal adjacent frames $\fr_i = \fr_{i+1}$. Unless such a pair is found, we increment $N$ by $1$ and repeat the procedure.

\begin{algorithm}[t]
\SetAlgoLined
\tcp{finding and blocking bad states}
  \While{SAT$[\fr_N \land \lnot \prop]$}{ \label{main:block_start}
    extract a bad state $m$ from the SAT model\;
    generalise $m$ to a cube $s$\;
    recursively block $s$ as per \texttt{block($s,N$)}\; \label{main:block_call}
    \tcp{at this point $\fr$ and/or $\trans$ have been updated to rule out $m$}
  } \label{main:block_end}
  \tcp{propagation of proven clauses}
  add new empty frame $\fr_{N+1}$\; \label{main:prop_start}
  \For{$k \in [1,N]$ and $c \in \fr_k$}{
    \If{$\fr_k \vDash c'$}{ 
      add $c$ to $\fr_{k+1}$\;
    }
  } \label{main:prop_end}
\caption{Blocking and propagation for one iteration of $N$.}
\label{alg:main}
\end{algorithm}


The most important step inside the \textbf{while} loop is the call to \texttt{block} (line \ref{main:block_call}). This routine is shown in \Cref{alg:block}. Here, we take care of the bad states in a straight-forward way. First, we consider its preimage under the controllable transition function $\trans_c$ (line \ref{block:contr_preimage}). The preimage cube $t$ can be found by taking a model of the satisfiable query $\fr_{k-1} \land \lnot s \land \trans_c \land s'$ and dropping the primed variables. Each such cube encodes states from which a bad state is reachable in one step. Thus, we update the supervisor to disallow transitions from those bad states (line \ref{block:update}). This accounts for the first while-loop in \Cref{alg:block}.

The second while-loop (lines \ref{block:uncontr_start}--\ref{block:uncontr_end})is very similar, but considers the uncontrollable transitions, encoded by $\trans_u$, instead. If a preimage cube is found here, we cannot rule it out by updating the supervisor. That preimage instead becomes a bad state on its own, to be controlled in the previous frame $k-1$.

\begin{algorithm}[t]
\SetAlgoLined
\KwData{A cube $s$ and a frame index $k$}
\tcp{first consider the controllable transitions:}
  \While{SAT$[\fr_{k-1} \land 
               \lnot s \land
               \trans_c \land
               s']$}{
    extract and generalise a bad cube $t$ in the preimage of $\trans_c$\; \label{block:contr_preimage}
    update $\trans_c := \trans_c \land \lnot t$\; \label{block:update}
  }
\tcp{then consider the uncontrollable transitions:}
  \While{SAT$[\fr_{k-1} \land 
               \lnot s \land
               \trans_u \land
               s']$}{ \label{block:uncontr_start}
    \If{k=1}{ 
        throw \textbf{error: system uncontrollable}\;
    }
    extract and generalise a bad cube $t$ in the preimage of $\trans_u$\;
    call \texttt{block($t,k-1$)}\;
  } \label{block:uncontr_end}
add $\lnot s$ to $\fr_i, i \le k$\;
\caption{The blocking routine, which updates the supervisor.}
\label{alg:block}
\end{algorithm}

\begin{example}\noindent{\bf Example, revisited.} Recall the example in \Cref{fig:example}. Since it uses integer variables it seems to require an SMT-based version of PDRC. This particular example is so simple, however, that ``bit-blasting'' the problem into SAT by treating the proposition $x<i$ as a separate boolean variable for each value of $i$ in the domain of $x$ will yield the same solution.

PDRC requires $3$ iterations to completely supervise the system. In the first, the clause $\lnot l_5$ is added to the first frame $\frset_1$, after proving that it is not in the initial states. In the second, $\lnot l_5$ is found again but this time the uncontrollable transition from $l_3$ is followed backwards, and the clause $\lnot \alpha \lor \lnot l_3 \lor y \neq 2 \lor x \ngtr 2$ is also added to $\frset_1$, which allows us to add $\lnot l_5$ to $\frset_2$. Finally, in the third, the trace of preimages lead to the controllable transitions $l_1\rightarrow l_3$ and $l_2\rightarrow l_3$, and we add new guards to both (technically, we add new constraints to the transition function).

The updated system is the one shown in \Cref{fig:example_controlled}. The third iteration also proves the system safe, as we have $\fr_1 = \fr_2$. These frames then hold the invariant, $(\lnot \alpha \lor \lnot l_3 \lor y \neq 2 \lor x \ngtr 2) \land \lnot l_5$, which implies $\prop$ and is inductive under the updated $\trans$.

\end{example}

\subsection{Extension to SMT}
\label{sec:smt}

Our PDRC algorithm in \Cref{alg:main} uses SAT queries, and is straightforward to use with a regular SAT solver on systems with a propositional transition function. However, like in~\cite{Cimatti2012, GPDR} it is possible to extend it to other theories, such as Linear Integer Arithmetic, using an SMT solver. The SAT query in \Cref{alg:main} provides no diffuculty, but some extra thought is required for the ones in the blocking procedure, which follow this pattern:
\newline

\-\hspace{1cm} \textbf{while} \textit{SAT}$[\fr_i \land \lnot s \land \trans \land s']$ \textbf{do}

\-\hspace{1.5cm} extract and generalise a bad cube $t$ in the preimage of $\trans$;
\ \newline

If one only replaces the SAT solver by an SMT solver capable of handling the theory in question, one can extract a satisfying assignment of theory literals. However, each of these might contain both primed and unprimed variables, such as the next-state assignment $x' = x+1$.


These lines effectively ask the solver to generalise a state $m$---an assignment of theory literals satisfying some formula $\formula{F}$---into a more general cube $t$, ideally choosing the $t$ that covers the maximal amount of discrete states, while still guaranteeing $t \rightarrow \formula{F}$. In the SAT case, this is achieved by dropping literals of $t$ that do not affect the validity of $\formula{F}(t)$. An alternate method based on ternary simulation, that is useful when the query is for a preimage of a transition function $\trans$, is given in~\cite{PDR}. For the SMT case, however, the extent of generalisation depends on the theory and the solver.

In the worst case of a solver that cannot generalise at all, the algorithm is consigned to blocking a single state $m$ in each iteration. This means that the state space simplification gained from using a symbolic transition function in the first place is lost, since the reachability analysis checks states one by one. In conclusion, PDRC could be implemented for systems with boolean variables using a SAT-solver with no further issues, while an SMT version would require carefully selecting the right solver for the domain.
We leave this problem as an interesting task for future work. 







\section{Properties of PDRC}\label{sec:props}

In this section we prove the soundness and termination of our PDRC algorithm. 

\subsection{Termination}

\begin{theorem}
For systems with state variables whose domains are finite, the PDRC algorithm always terminates.
\end{theorem}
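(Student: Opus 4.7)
The plan is to exploit the finiteness of the state space. Since all variable domains (and the sets of locations and events) are finite, there are only finitely many concrete states, and hence finitely many distinct cubes, clauses, frames, and transition relations definable over the state variables. The algorithm's key mutable objects---the trace of frames $\frset_i$ and the controllable transition relation $\trans_c$---are only ever strengthened (conjuncted with new clauses $\lnot s$ or $\lnot t$), so each evolves monotonically in a finite lattice and cannot shrink or cycle. This monotonicity, together with the trace invariant $\fr_i \rightarrow \fr_{i+1}$, will be the engine of every termination argument below.

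First I would show that every invocation of \texttt{block$(s,k)$} terminates, by induction on $k$. Both \textbf{while}-loops inside \texttt{block} are guarded by SAT queries of the shape $\fr_{k-1} \land \lnot s \land \trans_{c/u} \land s'$. In the controllable loop, each iteration extracts a preimage cube $t$ and immediately updates $\trans_c := \trans_c \land \lnot t$, so that cube can never again witness the same query; by finiteness of the set of cubes, the loop exits in finitely many steps. In the uncontrollable loop, the recursive call \texttt{block$(t,k{-}1)$} strictly decreases $k$, bottoming out either in unsatisfiability or in the error at $k=1$; the inductive hypothesis gives termination of that call, and by its postcondition the clause $\lnot t$ is added to $\fr_{k-1}$, excluding $t$ from appearing again as a witness. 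Combined with finiteness this yields termination of the uncontrollable loop, and thus of the whole \texttt{block} routine.

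Next I would show termination of \Cref{alg:main} in two steps. The outer \textbf{while}-loop adds $\lnot s$ to $\fr_N$ on every iteration via \texttt{block}, so the formula $\fr_N \land \lnot \prop$ loses at least one satisfying cube per iteration; by finiteness it becomes unsatisfiable after boundedly many rounds. For the loop over $N$, the sequence of traces lives in the finite lattice of clause sets, and the clauses in each $\frset_i$ can only accumulate across iterations (both by direct blocking and by propagation). Hence the pair $(\frset_i, \frset_{i+1})$ eventually stops changing for some $i$, producing the syntactic equality $\fr_i = \fr_{i+1}$ that is the fixpoint termination criterion. The main obstacle I anticipate is making the ``each cube is processed only finitely often'' argument watertight across the mutually recursive mutations of $\trans_c$ and the frames; this will require exhibiting an explicit well-founded measure, for instance the lexicographic pair $(2^{|X|} - |\text{ruled-out cubes in }\trans_c \text{ and all }\frset_i|,\; k)$, and verifying that it strictly decreases on each recursive call and on each loop iteration of \texttt{block}.
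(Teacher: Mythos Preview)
Your approach matches the paper's---the paper also argues from finiteness of the state space and monotonicity, though it does so in a single short paragraph that largely defers to the termination proof for standard PDR. Your induction on $k$ for \texttt{block}, and your counting argument for the inner blocking loop in \Cref{alg:main}, are correct and more explicit than anything the paper writes out.

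There is, however, a genuine gap in your argument for the outer loop over $N$. From monotone accumulation in a finite lattice you can deduce that each $\frset_i$ eventually stabilises across iterations, and hence so does any fixed pair $(\frset_i,\frset_{i+1})$. But ``the pair stops changing'' only means each component becomes eventually constant; it does not mean the two components become equal to one another. You could have every $\frset_i$ frozen forever and still have $\frset_i \neq \frset_{i+1}$ for all $i$, in which case the fixpoint test never fires and your argument does not terminate the $N$-loop. The missing ingredient---and the one observation the paper singles out---is the chain-length bound: invariant~(1) gives $\fr_i \rightarrow \fr_{i+1}$, so as state sets the frames form a weakly increasing chain $\fr_0 \subseteq \fr_1 \subseteq \cdots \subseteq \fr_N$; if no two adjacent frames coincide the chain is strictly increasing, and a strictly increasing chain of subsets of a finite state space has length at most the number of states. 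Hence once $N$ exceeds that bound, two adjacent frames must coincide and the algorithm halts. This is a bound on $N$ itself, not on the accumulated clause sets, and it is independent of your stabilisation argument.
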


The termination of regular PDR is proven in~\cite{PDR}. In the case of an unsafe system---which for us corresponds to an uncontrollable system---the counterexample proving this must be finite in length, and thus found in finite time. In the case of a safe system, the proof is based on the following observations: that each proof-obligation (call to \texttt{block}) must block at least one state in at least one frame; that there are a finite number of frames for each iteration (value of $N$); that there are a finite number of states of the system; and that each $\fr_{i+1}$ must either block at least one more state than $\fr_i$, or they are equal.

All these observations remain true for PDRC, substituting ``uncontrollable'' for ``unsafe''. This means that the proof of termination from~\cite{PDR} can be used for PDRC with minimal modification.

\subsection{Correctness}

We claim that the algorithm described above synthesises a minimally restrictive safe controller for the original system.

\begin{theorem}
If there exists any safe controller for the system, the controller synthesised by the PDRC algorithm is safe.
\label{th:safety}
\end{theorem}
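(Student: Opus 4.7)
The plan is to reduce the safety claim to the standard correctness argument for PDR, invoked at the moment PDRC terminates. The algorithm has two possible termination modes: either it throws the \textbf{system uncontrollable} error inside the uncontrollable loop of \texttt{block} at $k=1$, or it reaches a fixed point $\fr_i=\fr_{i+1}$ after the propagation step of \Cref{alg:main}. Since we assume that \emph{some} safe controller exists, the proof has two parts: rule out the error case, and then verify that the fixed-point case yields a safe controller.

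First I would prove the contrapositive for the error case: if PDRC throws the error, then no safe controller exists. The observation is that $\trans_u$ is never modified by PDRC---only $\trans_c$ is strengthened, via the update $\trans_c := \trans_c \land \lnot t$---and that the recursive structure of \texttt{block} follows preimages exclusively through $\trans_u$. Hence an error at $k=1$ exhibits a concrete chain $s_0, s_1, \dots, s_m$ of uncontrollable transitions starting from a state $s_0 \in \init = \fr_0$ and ending in a cube that was originally extracted as a witness to $\fr_N \land \lnot \prop$. No controller may interrupt such a chain, so the original system admits no safe controller, contradicting the hypothesis.

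For the fixed-point case I would check that the three trace invariants listed earlier are preserved by both the blocking and propagation steps, \emph{with $\trans$ interpreted throughout as the current, possibly strengthened, $\trans^S = \trans_u \lor \trans_c^S$}. The blocking while-loop in \Cref{alg:main} exits exactly when $\fr_N \rightarrow \prop$; propagation preserves the $\fr_k \rightarrow \fr_{k+1}$ chain clause-by-clause; and each update $\trans_c := \trans_c \land \lnot t$ preserves the image-approximation invariant, because shrinking the transition relation can only shrink images. At the fixed point $\fr_i = \fr_{i+1}$ we therefore have $\init \rightarrow \fr_i$, $\fr_i \rightarrow \prop$, and $\fr_i(X) \land \trans^S(X,X') \rightarrow \fr_i(X')$, which is precisely an inductive invariant establishing safety of the controlled system by induction on path length.

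The hard part, I expect, will be the bookkeeping for invariant (3) across mutations of $\trans_c$: one must argue that clauses previously added to some $\fr_j$, whose validity was justified by SAT queries against an \emph{old} $\trans_c$, remain valid against the new, strictly stronger $\trans_c$. This follows from monotonicity---strengthening $\trans_c$ can only shrink the reachable set---but it deserves to be spelled out explicitly, since it is the single place where PDRC's dynamics diverge meaningfully from ordinary PDR, and an inattentive argument could silently fail there.
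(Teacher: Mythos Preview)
Your proposal is correct and, at its core, uses the same PDR fixed-point argument as the paper: once $\fr_i=\fr_{i+1}$, the trace invariants give an inductive strengthening of $\prop$ under the final $\trans^S$, hence safety. The paper's proof is terser: it simply takes the three invariants as maintained, assumes a reachable $\lnot\prop$-state, and derives a contradiction with the UNSAT check $\fr_M\land\lnot\prop$ at the fixed point $M$.

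Where you differ is in what you make explicit. First, you separately treat the \textbf{error termination mode} and argue the contrapositive (error implies no safe controller exists, via an uncontrollable path from $\init$ to $\lnot\prop$); the paper's proof does not mention this case at all and silently uses the hypothesis only to assume fixed-point termination. Second, you flag and resolve the \textbf{monotonicity issue}---that invariant~(3) must survive mid-run strengthenings of $\trans_c$---which the paper leaves entirely implicit by simply asserting the invariants are ``maintained.'' What your approach buys is a genuinely self-contained argument; what the paper's buys is brevity, at the cost of relying on the reader to fill in precisely the two points you identified as needing care.
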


\begin{proof}
We prove \Cref{th:safety} by contradiction. Assume there is an unsafe state $s$, i.e. we have $\lnot \prop(s)$, that is reachable from an $\init$-state in $k$ steps. We must then have $k \ge N$, since invariant $(2)$ states that $\fr_i\rightarrow \prop,i<N$.
Let $M$ be the index of the discovered fix point $\fr_{M} = \fr_{M+1}$.

Invariant $(1)$ (from \Cref{sec:trace_invariants}) states that $\fr_i \rightarrow \fr_{i+1}$, and this applies for all values $0 \le i \le M$. Repeated application of this means that any state in any $\fr_i, i < M$ is also contained in $\fr_M$.

Invariant $(3)$ states that $\fr_{i+1}$ is an over-approximation of the image of $\fr_i$. This means that any state reachable from $\fr_M$ should be in $\fr_{M+1}$. Since $\fr_M = \fr_{M+1}$, such a state is also in $\fr_M$ itself. Repeated application of this allows us to extend the trace all the way to $\fr_k = \fr_{k-1} = \dots = \fr_M$.

Now, for the bad state $s$, regardless of the number of steps $k$ needed to reach it, we know that $s$ is contained in $\fr_k$ and therefore in $\fr_M$. Yet when the algorithm terminated it had at one point found $\fr_M \land \lnot \prop$ to be UNSAT. The state $s$, which is both in $\fr_M$ and $\lnot \prop$, would constitute a satisfying assignment to this query. This contradiction proves that $s$ cannot exist.
\qedhere
\end{proof}

\begin{theorem}
A controller synthesised by the PDRC algorithm is minimally restrictive.
\label{th:min_restrictive}
\end{theorem}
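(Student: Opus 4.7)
The plan is to argue by contradiction. Assume there exists a safe controlled system $\trans^{S'}$ that is strictly less restrictive than the PDRC-synthesised $\trans^S$. By the definition of restrictiveness in \Cref{sec:desiderata}, this yields a state $q$, reachable under the original $\trans$, that is reachable under $\trans^{S'}$ but unreachable under $\trans^S$.

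The core ingredient is an inductive property of the blocking routine: every cube $s$ ever passed as an argument to $\texttt{block}(s,k)$ describes only \emph{uncontrollably unsafe} states, i.e., states from which $\lnot \prop$ is reachable along some path using only uncontrollable transitions. The base case is the top-level call at line \ref{main:block_call} of \Cref{alg:main}, where $s$ comes from a model of $\fr_N \land \lnot \prop$ and thus satisfies $\lnot \prop$ directly. The inductive step is exactly the second while-loop of \Cref{alg:block}: the recursively blocked cube $t$ has an uncontrollable transition into a cube that is already uncontrollably unsafe by the induction hypothesis, so $t$ is as well. From this I would deduce that any safe controller must render every state of any such blocked cube unreachable, because otherwise the forced uncontrollable path to $\lnot \prop$ remains enabled under the controller (since $\trans_u$ is never modified), violating safety.

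To close the argument, I would pick a path $\init = q_0 \to \cdots \to q_n = q$ enabled under $\trans^{S'}$ and take $i$ to be the smallest index with $q_i$ unreachable under $\trans^S$. The transition $q_{i-1} \to q_i$ must be controllable, since PDRC leaves $\trans_u$ untouched, and is disabled by a constraint $\lnot t$ added to $\trans_c$ during some call $\texttt{block}(s,k)$. Tracing the construction of $t$ as a preimage of $s$ under $\trans_c$, together with the inductive property, should give that $q_i$ lies in the uncontrollably unsafe cube $s$; by the previous paragraph, $q_i$ is therefore unreachable under $\trans^{S'}$ as well, contradicting the choice of path.

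The hardest step is the last one: the constraint $\lnot t$ added to $\trans_c$ is over unprimed variables only, so one must rule out the possibility that $q_{i-1} \to q_i$ is collateral damage from disabling a different, genuinely bad, controllable transition out of $q_{i-1}$. The cleanest way to handle this is to interpret the cube generalization as preserving enough information about the target side $s'$ that the update effectively removes only transitions whose destination lies in a blocked cube; under this reading, every transition PDRC forbids is one that any safe controller would be equally forced to forbid, yielding the desired minimality.
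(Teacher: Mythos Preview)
Your approach mirrors the paper's proof closely: both argue by contradiction, pick the first step along a path allowed by the competing controller $\trans^{S'}$ but forbidden by $\transpdrc$, observe that the offending restriction came from some $\lnot t$ added inside a call to \texttt{block}, and then trace the chain of \texttt{block} calls back (through uncontrollable preimages) to a model of $\fr_N\land\lnot\prop$. Your explicit inductive lemma---that every cube ever passed to \texttt{block} is uncontrollably unsafe---is a cleaner formulation than the paper's more ad hoc unwinding, but it is the same idea.

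You are right to flag the ``collateral damage'' step as the hard one, and in fact the paper's proof has the very same gap. The paper needs that the competing controller $\trans^{S}$ also allows the transition from $s_i$ into the uncontrollably unsafe cube; it obtains this by writing ``$\trans^{S}$, whose $\trans^{S}_c$ does not restrict transitions from $s_i$''---but the hypothesis only says $\trans^{S}$ allows the \emph{particular} transition $s_i\to s_{i+1}$, not that it leaves all controllable transitions out of $s_i$ untouched. Since the update $\trans_c:=\trans_c\land\lnot t$ is over unprimed variables only (the paper explicitly says to drop the primed variables when forming $t$), the restriction disables every controllable transition out of $t$, and $s_{i+1}$ need not lie in the blocked cube at all. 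Your proposed fix---reading the update as effectively carrying the target information $s'$---is the natural repair, and it matches what the worked example in \Cref{fig:example_controlled} actually does (the added guards constrain the target of specific edges, not all outgoing edges from a source); but as written, neither your argument nor the paper's closes this gap without that reinterpretation of line~\ref{block:update}.
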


\begin{proof}
We prove \Cref{th:min_restrictive} also by contradiction. Assume there is a safe path $\pi = s_0,s1,\dots,s_k$ through the original system (with transition function $\trans$), which is not possible using the controlled transition function $\transpdrc$; yet there exists another safe, controllable supervisor represented by $\trans^S$ where $\pi$ is possible. By deriving a contradiction, we will prove that no such $\trans^S$ can exist.

Consider the first step of $\pi$ that is not allowed by $\transpdrc$; in other words, a pair $(s_i, s_{i+1})$ where we have $\lnot \transpdrc(s_i,s_{i+1})$ while we do have both $\trans^{S}(s_i,s_{i+1})$ and $\trans(s_i,s_{i+1})$. The only way that $\transpdrc$ is more restrictive than $\trans$ is due to strengthenings on the form $\transpdrc_c = \trans_c \land \lnot m$, for some cube $m$. This means that $s_i$ must be in some cube $m$ that PDRC supervised in this fashion.

This happened inside a call \texttt{block($m,j$)}. Since $\pi$ is safe, this call cannot have been made because $m$ itself encoded unsafe states. Instead, there must have been a previous call \texttt{block($n,j+1$)}, where $m$ is a minterm of the preimage of $n$ under $\trans_u$. This cube $n$ is either itself a bad cube, or it can be traced to a bad cube by following the trace of \texttt{block} calls. Since each step in this \texttt{block} chain only uses $\trans_u$, we can find a series of uncontrollable transitions, starting in some $\tilde{s}_{i+1} \in n$, leading to some cube $p$ which is a generalisation of a satisfying assignment to the query $\fr_N \land \lnot \prop$.

This proves that $\trans^S$, whose $\trans^S_c$ does not restrict transitions from $s_i$, allows for the system to enter a state $\tilde{s}_{i+1}$, from which there is an uncontrollable path to an unsafe state. This contradicts the assumption that $\trans^S$ was safe, proving that the combination of $\pi$ and $\trans^S$ cannot exist. This proves that the controller encoded by $\transpdrc$ is minimally restrictive.
\qedhere
\end{proof}



\section{Implementation}

We have implemented a prototype of PDRC in the model checker Tip (Temporal Inductive Prover~\cite{Tip}). The input format supported by Tip is AIGER~\cite{AIGER}, where the transition system is represented as a circuit, which is not a very intuitive way to view an EFSM or STS. For this reason, our prototype also includes Haskell modules for creating a transition system in a control-theory-friendly representation, converting it to AIGER, and using the output from the Tip-PDRC to reflect the new, controlled system synthesised by PDRC. Finally, it also includes a parser from the .wmod format used by WATERS and Supremica~\cite{WATERS}, into our Haskell representation. 
Altogether, our implementation consists of about $150$ lines of code added or changed in the Tip source, and about $1600$ lines of Haskell code. Our tools, together with the benchmarks we used, is available through \href{https://github.com/JonatanKilhamn/supermini}{github.com/JonatanKilhamn/supermini} and \href{https://github.com/JonatanKilhamn/tipcheck}{github.com/JonatanKilhamn/tipcheck}.

When converting transition systems into circuits, certain choices have to be made. Our encoding allows for synchronised automata with \emph{one-hot-encoded} locations (e.g. location $l_3$ out of $5$ is represented by the bits $[0,0,1,0,0]$) and \emph{unary-encoded} integer variables (e.g. a variable ranging from $0$ to $5$ currently having the value $3$ is represented by $[1,1,1,0,0]$). Each of these encoding has a corresponding invariant: with one-hot, exactly one bit must be set to $1$; with unary, each bit implies the previous one. However, these invariants need not be explicitly enforced by the transition relation (i.e. as guards on every transition), rather, it is enough that they are preserved by all variable updates.

It should be noted that although the PDRC on a theoretical level works equally well on STS as EFSM, our implementation does assume the EFSM division between locations and variables for the input system. However, our implementation retains the generality of PDRC in how the state space is explored---the algorithm described in \Cref{sec:algo} is run on the circuit representation, where the only difference between the location variable $x_L$ and any other variable is the choice of encoding.

\section{Experiments}
\label{sec:exp}

For an empirical evaluation, we ran PDRC on several standard benchmark problems: the extended dining philosophers (EDP)~\cite{Supremica2008}, the cat and mouse tower (CMT)~\cite{Supremica2008} and the parallell manufacturing example (PME)~\cite{PME}. The runtimes of these experiments are shown in \Cref{tab:results} below. The benchmarks were performed on a computer with a $2.7$ GHz Intel Core i5 processor and 8GB of available memory.

\subsection{Problems}

For the dining philosophers, EDP$(n,k)$ denotes the problem of synthesising a safe controller for $n$ philosophers and $k$ intermediary states that each philosopher must go through between taking their left fork and taking their right one. The transition system is written so that all philosophers respect when their neighbours are holding the forks, except for the even-numbered ones who will try to take the fork to their left even if it is held, which leads (uncontrollably) to a forbidden state.

For the cat and mouse problem, CMT$(n,k)$ similarly denotes the problem with $n$ floors of the tower, $k$ cats and $k$ mice. Again, the transition system already prohibits cats and mice from entering the same room (forbidden state) except by a few specified uncontrollable pathways.

Finally, the parallel manufacturing example (PME) represents an automated factory, with an industrial robot and several shared resources. It differs from the other in that its scale comes mainly from the number of different synchronised automata. In return, it does not have a natural parameter that can be set to higher values to increase the complexity further.

\subsection{Results}

We compare PDRC to Symbolic Supervisory Control using BDD (SC-BDD)~\cite{BDD1,BDD2}, which is implemented within Supremica. We wanted to include the Incremental, Inductive Supervisory Control (IISC) algorithm~\cite{Shoaei2014}, which also uses PDR but in another way. However, the IISC implementation from~\cite{Shoaei2014} is no longer maintained. Despite this failed replication, we include figures for IISC taken directly from \cite{Shoaei2014}---with all the caveats that apply when comparing runtimes obtained from different machines. 
\Cref{tab:results} shows runtimes, where the problems are denoted as above and ``$\times$'' indicates time-out (5 min). The parameters for EDP and CMT were chosen to show a wide range from small to large problems, while still mostly choosing values for which \cite{Shoaei2014} reports runtimes for IISC. We see that while SC-BDD might have the advantage on certain small problems, PDRC quickly outpaces it as the problems grow larger.

\begin{table}
\centering
\caption{Performance of PDRC (our contribution), SC-BDD and IISC on standard benchmark problems. Note that the IISC implementation was not reproducible by us; the numbers here are lifted from \cite{Shoaei2014}. ``$\times$'' indicates timeout ($5$ min), and ``--'' means this particular problem was not included in \cite{Shoaei2014}.}
 \begin{tabular}{|c||c|c|c|c|} 
 \hline
 Model & PDRC & IISC\cite{Shoaei2014} & SC-BDD \\
 \hline\hline
 CMT(1,5) & $0.09$ & $0.13$ & $0.007$ \\ 
 \hline
 CMT(3,3) & $1.3$ & $0.43$ & $1.12$ \\ 
 \hline
 CMT(5,5) & $8.3$ & $0.73$ & $\times$ \\ 
 \hline
 CMT(7,7) & $30.02$ & $0.98$ & $\times$ \\ 
 \hline
 EDP(5,10) & $0.03$ & $0.98$ & $0.031$ \\
 \hline
 EDP(10,10) & $0.15$ & -- & $0.10$ \\
 \hline
 EDP(5,50) & $0.03$ & $0.12$ & $0.26$ \\
 \hline
 EDP(5,200) & $0.06$ & $0.12$ & $\times$ \\
 \hline
 EDP(5,10e3) & $0.19$ & $0.12$ & $\times$ \\
 \hline
 PME & $0.72$ & $2.3$ & $8.1$ \\ 
 \hline
\end{tabular}

\label{tab:results}
\end{table}


\section{Discussion} \label{sec:comparison}

In this section, we relate briefly how BDD-SC~\cite{BDD1,BDD2} and IISC~\cite{Shoaei2014} work, in order to compare and contrast to PDRC.

\subsection{BDD-SC}


BDD-SC works by modelling an FSM as a \emph{binary decision diagram} (BDD). The algorithm generates a BDD, representing the safe states, by searching backwards from the forbidden states. However, the size of this BDD grows with the domain of the integer variables. The reason is that the size of the BDD is quite sensitive to the number of binary variables, but also the ordering of the variables in the BDD. Even when more recent techniques on partitioning of the problem are used~\cite{BDD2}, the size of the BDD blows up, and we see in Table 1 that BDD-SC very quickly goes from good performance to time-out.

\subsection{IISC}

It is natural to compare PDRC to IISC~\cite{Shoaei2014}, since the latter is also inspired by PDR (albeit under the name IC3). In theory, PDRC has some advantages.


The first advantage is one of representation. IISC is built on the EFSM's separation between locations and variables, as described in \ref{sec:des-models}. PDRC, on the other hand, handles the more general STS representation. Specifically, IISC explicitly unrolls the entire sub-state-space spanned by the locations. This sub-space can itself suffer a space explosion when synchronising a large number of automata. 

To once again revisit our example (\Cref{fig:example}): IISC would unroll the graph, starting in $l_0$, into an \emph{abstract reachability tree}. Each node in such a tree can cover any combination of variable values, but only one location. Thus, IISC effectively does a forwards search for bad locations, and the full power of PDR (IC3) is only brought to bear on the assignment of variables along a particular error trace. Thus, a bad representation choice w.r.t. which parts of the system are encoded as locations versus as variables can hurt IISC, while PDRC is not so vulnerable. 

PDRC, in contrast, leverages PDR's combination of forwards and backwards search: exploring the state space backwards from the bad states in order to construct an inductive invariant which holds in the initial states. One disadvantage of the backwards search is that PDRC might add redundant safeguards. For example, the safeguard on the transition from $l_1$ to $1_3$ in \Cref{fig:example_controlled} is technically redundant, as there is no way to reach $l_2$ with the restricted variable values from the initial states. As shown in \cite{Shoaei2014}, IISC does not add this particular guard. However, since both methods are proven to yield minimally-restrictive supervisors, any extra guards added by PDRC are guaranteed not to affect the behaviour of the final system.

The gain, on the other hand, 
is that one does not need to unroll the whole path from the initial state to the forbidden state in order to supervise it. Consider: each such error path must have a ``point of no return''---the last controllable transition. When synthesising for safety, this transition must never be left enabled (our proof of \Cref{th:min_restrictive} hinges upon this). In order to find this point, PDRC traverses only the path between the point of no return and the forbidden state, whereas IISC traverses the whole path. In a sense, PDRC does not care about how one might end up close to forbidden state, but only where to put up the fence.



In practice, our results have IISC outperforming PDRC on both PDE and CMT. We believe the main reason is that unlike IISC which uses IC3 extended to SMT~\cite{Cimatti2012}, our implementation of PDRC works in SAT. This means that while both algorithms are theoretically equipped to abstract away large swathes of the state space, IISC does it much easier on integer variables than PDRC, which needs to e.g. represent each possible value of a variable as a separate gate. 

The one point where PDRC succeeds also in practice is on the PME problem. Here, most of the system's complexity comes from the number of different locations across the synchronised automata, rather than from large variable domains. In order to further explore this difference in problem type, we would have liked to evaluate PDRC and IISC on more problems with more synchronised automata, such as EDP(10,10). Sadly, this was impossible since the IISC implementation is no longer maintained.

\section{Conclusions and Future Work}

We have presented PDRC, an algorithm for controller synthesis of discrete event systems with uncontrollable transitions, based on property-driven reachability. The algorithm is proven to terminate on all solvable problem instances, and its synthesised controllers are proven to be safe and minimally restrictive. We have also implemented a prototype in the SAT-based model checker Tip. Our experiments show that even this SAT-based implementation outperforms a comparable BDD-based approach, 
but not the more recent IISC. However, since the implementation of IISC we compare against uses an SMT solver, not to mention that it is not maintained anymore, we must declare the algorithm-level comparison inconclusive.

The clearest direction for future research would be to implement PDRC using an SMT solver, to see if this indeed does realise further potential of the algorithm like we believe. Both \cite{Cimatti2012} and \cite{GPDR} provide good insights for this task. However, another interesting direction is to use both PDRC and IISC as a starting point to tackling the larger problem: safe \emph{and nonblocking} controller synthesis. Expanding the problem domain like this cannot be done by a trivial change to PDRC, but hopefully the insights from this work can contribute to a new algorithm. Another technique to draw from is that of IICTL~\cite{Hassan2012}. As discussed in \Cref{sec:desiderata}, by restricting our problem to only safety, we remove ourselves from real-world applications. For this reason, we do not present PDRC as a contender for any sort of throne, but as a stepping stone towards the real goal: formal, symbolic synthesis and verification of discrete supervisory control.



\newpage

\printbibliography




\end{document}